\newfont{\bbb}{msbm10 scaled 500}
\newfont{\bb}{msbm10 scaled 1100}
\newcommand{\EE}{\mbox{\bb E}}
\newcommand{\Prob}{\textrm{Pr}}
\theoremstyle{plain}
\newtheorem{theorem}{Theorem}
\newtheorem{lemma}[theorem]{Lemma}
\definecolor{OXO-emph}{RGB}{153,0,0}
\title{Lossy Compression of Exponential and Laplacian Sources using Expansion Coding}
\author{\IEEEauthorblockN{Hongbo~Si, O.~Ozan~Koyluoglu,
and Sriram~Vishwanath}
\IEEEauthorblockA{Laboratory for Informatics, Networks, and Communications\\
Wireless Networking and Communications Group\\
The University of Texas at Austin\\
1 University Station, C0806, Austin, TX 78712\\
Email: \{sihongbo,ozan\}@mail.utexas.edu, sriram@austin.utexas.edu}}
\begin{document}

\maketitle


\begin{abstract}

A general method of source coding over expansion is proposed in this paper, which enables one to reduce the problem of compressing an analog (continuous-valued source) to a set of much simpler problems, compressing discrete sources. Specifically, the focus is on lossy compression of exponential and Laplacian sources, which is subsequently expanded using a finite alphabet prior to being quantized. Due to decomposability property of such sources, the resulting random variables post expansion are independent and discrete. Thus, each of the expanded levels corresponds to an independent discrete source coding problem, and the original problem is reduced to coding over these parallel sources with a total distortion constraint. Any feasible solution to the  optimization problem is   an achievable rate distortion pair of the original  continuous-valued source compression problem. Although finding the solution to this optimization problem at every distortion is hard, we show that our expansion coding scheme presents a good solution in the low distrotion regime. Further, by adopting low-complexity codes designed for discrete source coding, the total coding complexity can be tractable in practice.

\end{abstract}


\section{Introduction}

The compression of continuous-valued sources remains one of the most well-studied (and practically valuable) research directions in Information Theory. Given the increased importance of voice, video and other multimedia, all of which are typically "analog" in nature, the value associated with low-complexity algorithms to compress continuous-valued data is likely to remain significant in the years to come.

For discrete-valued "finite alphabet", both the associated coding theorem \cite{Cover:IT1991} and practically-meaningful coding schemes are now well known. Trellis based quantizers \cite{Viterbi:Trellis74} are the first to achieve the rate distortion tradeoff, but with encoding complexity scaling exponentially with the constraint length. Later, Matsunaga and Yamamoto \cite{Matsunaga:LDPC2003} show that  a low density parity check (LDPC) ensemble, under suitable conditions on ensemble structure, can achieve the rate distortion bound using an optimal decoder. Further, \cite{Matsunaga:LDGM2010} shows that low density generator matrix (LDGM) codes, as the dual of LDPC codes, with suitably irregular degree distributions, empirically perform close to the Shannon rate-distortion bound with message-passing algorithms. More recently, polar codes \cite{Arikan:Channel08}, are the first provably rate distortion limit achievable codes with low encoding and decoding complexity \cite{Korada:Source10}.

In the case of analog sources, although both practical coding schemes as well as theoretical analysis is very heavily studied, a very limited literature exists that connects theory with low-complexity codes in practice. The most relevant literature in this context is on lattice compression and its low-density constructions \cite{Zamir:Lattice09}, although this literature is somewhat limited in scope and application.

 In the domains of image compression and speech coding, Laplacian and exponential distributions are widely adopted as natural models of correlation between pixels and amplitude of voice \cite{Gallager:Information68}. Exponential distribution is also fundamental in characterizing continuous-time Markov processes \cite{Verdu:Exponential96}. Although the rate distortion functions for both have  been known for decades, there is still a gap between theory and existing low-complexity coding schemes for them. Some schemes have been proposed, primary for the medium to high distortion regime, such as Markov chain Monte Carlo (MCMC) based approach \cite{Baron:MCMC12}. Our general understanding of low-complexity coding schemes, particular for the low-distortion regime, remains limited.

In this paper, we present an expansion coding scheme for both exponential and Laplacian sources, which not only  performs well  in the low distortion regime, but also can be implemented with low encoding and decoding complexity. Previously, our work in \cite{Ozan:Expansion12} considers the dual problem of expansion coding for the channel coding case, where exponential noise channels are converted to coding over a set of parallel (and independent) discrete channels. Further, adopting capacity achieving codes to the resulting parallel channels, expansion coding is shown to achieve the channel capacity at high SNR with low complexity. For source coding, we utilize a similar approach here. Consider expanding exponential and Laplace sources into binary sequences, and coding over the resulting set of parallel discrete sources. By carefully choosing the parameters for each of the parallel lossy compression problems, we show that the achievable rates for original source approaches the rate distortion limit, in ratio, in the low distortion regime.

The rest of paper is organized as follows. The next
section describes the background of source coding problem. In Section III and VI, we
present the main results of this paper, expansion coding technique for exponential and Laplacian source, respectively.
The paper concludes with a discussion section.


\section{Background}

\subsection{Source Coding Problem}

Consider an i.i.d. source $X_1, X_2,\ldots, X_n$. A $(2^{nR},n)$-rate distortion code consists of an encoding function $g:\mathbb{R}^n\to\mathcal{M}$, where $\mathcal{M}\triangleq\{1,\ldots,2^{nR}\}$, and a decoding function $h:\mathcal{M}\to\mathbb{R}^n$, which codes $X^n$ to an estimate $\hat{X}^n$. Then, a rate and distortion pair $(R,D)$ is said to be achievable if there exists a sequence of $(2^{nR},n)$-rate distortion codes with $\lim\limits_{n\to\infty}\mathbb{E}[d(X^n,\hat{X}^n)]\leq D$. The rate distortion function $R(D)$ is the infimum of such rates, and by Shannon's theorem \cite{Cover:IT1991}, we have:
\begin{equation}
R(D)=\min_{f(\hat{x}|x):\mathbb{E}[d(X^n,\hat{X}^n)]\leq D}I(X;\hat{X}).\label{fun:rate_distortion_theorem}
\end{equation}

\subsection{Decomposability of Exponential Distribution}

The intuition underlying expansion coding originates from decomposability property  of exponential random variables, where it can be expressed as summation of a set of independent  discrete-valued random variables.  The following lemma crystallizes this concept:

\begin{lemma}[\cite{Ozan:Expansion12}, \cite{Marsaglia:Random71}]\label{lem:exponential_expansion}
Let $B_l$'s be independent Bernoulli random variables, and their distribution is given by parameters
$b_{l}\triangleq\Prob\{B_l=1\}$.
Then, the random variable
$$B=\sum_{l=-\infty}^{\infty} 2^l B_l$$
 is exponentially distributed
with mean $\lambda^{-1}$,
if and only if the choice of $b_{l}$ is given by
\begin{equation}
b_l=\frac{1}{1+e^{\lambda 2^l}}. \nonumber
\end{equation}
\end{lemma}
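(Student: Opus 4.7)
The plan is to verify the claim via the moment generating function (MGF) and match it against $\lambda/(\lambda+s)$, the MGF of an $\mathrm{Exp}(\lambda)$ variable. By independence of the $B_l$'s,
\[
\mathbb{E}[e^{-sB}] \;=\; \prod_{l} \mathbb{E}\bigl[e^{-s\,2^l B_l}\bigr] \;=\; \prod_{l} \bigl[(1-b_l) + b_l\,e^{-s\,2^l}\bigr].
\]
Substituting the candidate $b_l = 1/(1+e^{\lambda 2^l})$, each factor becomes $(e^{\lambda 2^l}+e^{-s\,2^l})/(1+e^{\lambda 2^l})$, which I would rewrite as $e^{-s\,2^l}\cdot(1+e^{(\lambda+s)2^l})/(1+e^{\lambda 2^l})$ in order to expose a telescoping structure.

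The key step is the telescoping identity
\[
\prod_{l=-N}^{M}\bigl(1 + e^{\alpha\,2^l}\bigr) \;=\; \frac{e^{\alpha\,2^{M+1}}-1}{e^{\alpha\,2^{-N}}-1} \qquad (\alpha>0),
\]
obtained by iterating $(1-y)(1+y)=1-y^2$. Applying it with $\alpha=\lambda$ and $\alpha=\lambda+s$, the truncated product over $l\in\{-N,\ldots,M\}$ collapses to
\[
e^{-s(2^{M+1}-2^{-N})}\cdot\frac{e^{(\lambda+s)2^{M+1}}-1}{e^{\lambda 2^{M+1}}-1}\cdot\frac{e^{\lambda 2^{-N}}-1}{e^{(\lambda+s)2^{-N}}-1}.
\]
Sending $M\to\infty$ (with $\lambda+s>0$), the leading exponentials dominate and the first two factors combine to $1$; sending $N\to\infty$, a first-order Taylor expansion $e^{\beta\,2^{-N}}-1\sim\beta\,2^{-N}$ collapses the last factor to $\lambda/(\lambda+s)$. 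Uniqueness of the MGF on a neighborhood of $0$ then identifies the law of $B$ as $\mathrm{Exp}(\lambda)$, proving the ``if'' direction.

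For the converse, the plan is to use that a continuous random variable has an almost-surely unique binary expansion, so each $B_l$ is a deterministic function of $B$: explicitly, $B_l=\lfloor 2^{-l}B\rfloor\bmod 2$. Under $B\sim\mathrm{Exp}(\lambda)$, I would evaluate
\[
\Pr(B_l=1)\;=\;\sum_{k=0}^{\infty}\bigl[e^{-\lambda\,2^l(2k+1)} - e^{-\lambda\,2^l(2k+2)}\bigr],
\]
a geometric series that collapses to $e^{-\lambda 2^l}/(1+e^{-\lambda 2^l}) = 1/(1+e^{\lambda 2^l})$; this pins down the $b_l$'s uniquely and is consistent with the assumed independence, since the analogous computation on joint events factors over disjoint coordinate intervals.

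The main obstacle I anticipate is making the doubly infinite product precise: the standalone factor $\prod_l e^{-s\,2^l}$ diverges, so the cancellation against the telescoped ratios must be kept grouped until after the limits $M,N\to\infty$ are taken. A subsidiary care point is almost-sure convergence of $B=\sum_l 2^l B_l$: on the $l\to+\infty$ side, $b_l\le e^{-\lambda 2^l}$ decays doubly exponentially, so Borel--Cantelli ensures only finitely many high-order bits equal $1$, while on the $l\to-\infty$ side the contributions $2^l B_l$ are trivially summable.
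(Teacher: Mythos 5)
Your proof is correct and follows essentially the same route as the paper's: the ``if'' direction via the moment generating function (the paper uses $\EE[e^{tB}]$ rather than $\EE[e^{-sB}]$) with the same telescoping product identity and the same two-sided limit, and the ``only if'' direction by computing $\Prob\{B_l=1\}$ directly from the exponential CDF over the dyadic intervals where bit $l$ equals one. Your added care about grouping the divergent factor $\prod_l e^{-s2^l}$ with the telescoped ratios and about almost-sure convergence of the sum is a welcome tightening, but not a different argument.
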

\begin{proof}
Proof is given in \cite{Ozan:Expansion12} and \cite{Marsaglia:Random71}, and follows from the memoryless property of exponential distribution.
For completeness, we provide the proof in Appendix~\ref{sec:AppA}.
\end{proof}

A set of typical numerical values of $b_l$s by fixing $\lambda=1$ is shown in Fig.~\ref{fig:Exp_Prob}. It is evident that $b_l$ approaches  $0$
for the ``higher" levels and approaches $0.5$ for what we refer to as  ``lower" levels. Hence, the primary non-trivial levels within which coding is meaningful are the so-called ``middle" ones, which provides the basis for  truncating the number of levels to a finite value without a significant loss in performance.

\begin{figure}[h!]
 \centering
 \includegraphics[width=0.9\columnwidth]{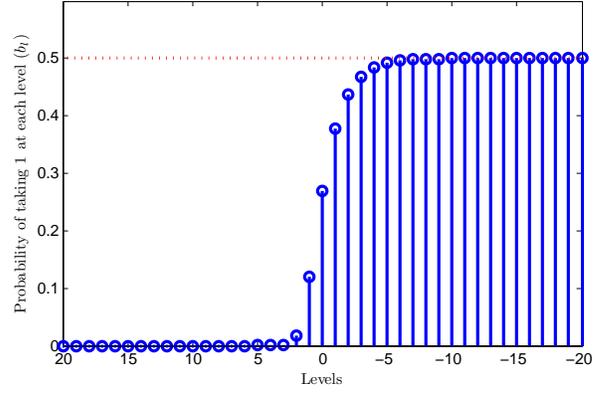}
 \caption{{\bf Numerical results for a set of $b_l$ for different levels with $\lambda=1$.}
}
\label{fig:Exp_Prob}
\end{figure}


\section{Exponential Source Coding}

\subsection{Problem Setup}

Consider an i.i.d. exponential source $X_1, X_2,\ldots, X_n$, i.e. omitting index $i$, the probability density function is given by
\begin{equation}
f_X(x)=\lambda e^{-\lambda x},\quad x\geq 0,\label{fun:exp_pdf}
\end{equation}
where $\lambda$ is the parameter of exponential distribution, i.e. $\mathbb{E}[X]=1/\lambda$. Distortion measure of concern is one-sided error distortion, i.e.
\begin{equation}
d(x^n,\hat{x}^n)=\left\{\begin{array}{ll}
\frac1n\sum\limits_{i=1}^n(x_i-\hat{x}_i),&\text{if } x_i\geq\hat{x}_i,\\
\infty,&\text{otherwise.}\end{array}\right.
\label{fun:exp_distortion_definition}
\end{equation}

This setup is equivalent to \cite{Verdu:Exponential96}, where another distortion measure is considered.

\begin{lemma}[\cite{Verdu:Exponential96}]\label{lem:exp_rate_distortion}
The rate distortion function for exponential source with one-sided error distortion is given by
\begin{align}
R(D)=\left\{\begin{array}{ll}
-\log  (\lambda D), &0\leq D\leq \frac{1}{\lambda},\\
0,&D>\frac{1}{\lambda}.
\end{array}
\right.
\end{align}
Moreover, the optimal conditional distribution to achieve the limit is given by
\begin{align}
f^*_{X|\hat{X}}(x|\hat{x})=\frac{1}{D} e^{- (x-\hat{x})/D},\quad x\geq\hat{x}\geq0.\label{fun:exp_optimal_conditional}
\end{align}
\end{lemma}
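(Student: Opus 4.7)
The plan is to handle the two regimes separately and, in the informative regime, prove matching converse and achievability bounds on $R(D)$. For $D>1/\lambda$ the bound is trivial: take $\hat{X}\equiv 0$, so the one-sided constraint $\hat{X}\le X$ is satisfied almost surely, the expected distortion equals $\mathbb{E}[X]=1/\lambda\le D$, and $I(X;\hat{X})=0$.

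For $0\le D\le 1/\lambda$, I would first establish the converse using the characterization in \eqref{fun:rate_distortion_theorem}. Write $I(X;\hat{X})=h(X)-h(X|\hat{X})$, and recall that the differential entropy of an exponential with mean $1/\lambda$ is $h(X)=1-\log\lambda$. Let $Z\triangleq X-\hat{X}$; the distortion measure in \eqref{fun:exp_distortion_definition} forces $Z\ge 0$ almost surely (otherwise the expected distortion is $+\infty$) and $\mathbb{E}[Z]\le D$. Since translation does not change differential entropy,
\[
h(X|\hat{X})=h(Z|\hat{X})\le h(Z)\le 1+\log D,
\]
where the last step invokes the maximum-entropy principle: among all densities on $[0,\infty)$ with mean at most $D$, the exponential with mean $D$ maximizes $h$. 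Combining these bounds gives $I(X;\hat{X})\ge -\log(\lambda D)$, which is the desired converse.

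For achievability I would exhibit a test channel attaining the converse. Interpret \eqref{fun:exp_optimal_conditional} as $X=\hat{X}+W$ with $W\sim\mathrm{Exp}(1/D)$ independent of $\hat{X}$, and then choose the distribution of $\hat{X}$ so that the induced marginal of $X$ is $\mathrm{Exp}(\lambda)$. Using moment generating functions, $M_X(t)=M_{\hat{X}}(t)M_W(t)$ forces
\[
M_{\hat{X}}(t)=\lambda D+(1-\lambda D)\cdot\frac{\lambda}{\lambda-t},
\]
so $\hat{X}$ must be the mixture placing mass $\lambda D$ at the origin and equal to $\mathrm{Exp}(\lambda)$ with probability $1-\lambda D$. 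With this construction, $\mathbb{E}[X-\hat{X}]=\mathbb{E}[W]=D$ and
\[
I(X;\hat{X})=h(X)-h(X-\hat{X}\mid\hat{X})=h(X)-h(W)=-\log(\lambda D),
\]
matching the converse and simultaneously certifying the optimality of \eqref{fun:exp_optimal_conditional}.

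The main obstacle is guessing the correct mixing distribution for $\hat{X}$: once one realizes that the optimal reproduction must concentrate mass $\lambda D$ at zero and be $\mathrm{Exp}(\lambda)$ otherwise, the MGF check is mechanical and the remaining calculations reduce to standard differential-entropy identities. The conceptual content is that the backward channel \eqref{fun:exp_optimal_conditional} and the exponential source are jointly consistent only with this specific atom-plus-exponential reproduction law.
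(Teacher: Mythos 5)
Your proof is correct, and your converse is essentially the paper's own argument: write $I(X;\hat{X})=h(X)-h(X|\hat{X})$, drop the conditioning, and invoke the maximum-entropy property of the exponential among nonnegative densities with a mean constraint to get $h(X-\hat{X})\le\log(eD)$, whence $I(X;\hat{X})\ge-\log(\lambda D)$. Where you go genuinely beyond the paper is achievability: the paper's Appendix~B merely asserts that a backward test channel with additive $\mathrm{Exp}(1/D)$ noise ``gives'' \eqref{fun:exp_optimal_conditional}, without verifying that there exists a legitimate reproduction law $\hat{X}$ whose convolution with that noise reproduces the $\mathrm{Exp}(\lambda)$ source marginal. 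Your MGF computation supplies exactly that missing piece: $M_{\hat{X}}(t)=\lambda(1-Dt)/(\lambda-t)=\lambda D+(1-\lambda D)\cdot\lambda/(\lambda-t)$ is a valid MGF (an atom at $0$ of mass $\lambda D$ mixed with $\mathrm{Exp}(\lambda)$) precisely when $\lambda D\le 1$, which also explains structurally why the formula $-\log(\lambda D)$ holds only in that regime. Your explicit treatment of $D>1/\lambda$ via $\hat{X}\equiv 0$ is a further small completeness gain over the paper's sketch. The only cosmetic caveat is to keep the logarithm base consistent (your $h(X)=1-\log\lambda$ reads as nats while the paper writes $\log(e/\lambda)$); the additive constants cancel either way, so the conclusion is unaffected.
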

\begin{proof}
Proof is given in \cite{Verdu:Exponential96}, and it is based on the observation that among the ensemble of all probability density functions with positive support set and mean constraint, exponential distribution maximizes the differential entropy. By designing a test channel from $\hat{X}$ to $X$, with additive noise distributed as exponential with parameter $1/D$, both the infimum mutual information and optimal conditional distribution can be characterized. Details can be found in Appendix~\ref{sec:AppB}.
\end{proof}

\subsection{Expansion Coding}

Using Lemma \ref{lem:exponential_expansion}, we can reconstruct exponential distribution with parameter $\lambda$ by a set of discrete Bernoulli random variables. In particular, the expansion of exponential source over levels ranging from $-L_1$ to $L_2$ can be expressed as
\begin{equation}
X_i=\sum_{l=-L_1}^{L_2}2^lX_{i,l},\quad i=1,2,\ldots,n,\label{fun:exp_expansion_source}
\end{equation}
where $X_{i,l}$ are Bernoulli random variables with parameter
\begin{align}
p_{l}=\Prob\{X_{i,l}=1\}=\frac{1}{1+e^{\lambda 2^l}}.\label{fun:pl}
\end{align}
The expansion will perfectly approximate exponential source by letting $L_1,L_2\to\infty$. Consider a similar expansion of the source estimate, i.e.
\begin{equation}
\hat{X}_i=\sum_{l=-L_1}^{L_2}2^l \hat{X}_{i,l},\quad i=1,2,\ldots,n, \label{fun:exp_estimate_expansion}
\end{equation}
where $\hat{X}_{i,l}$ is resulting Bernoulli random variable with parameter $\hat{p}_l=\Prob\{\hat{X}_{i,l}=1\}$.

Using the concept of expansion, the original problem of coding for a continuous source can be translated to a problem of coding for a set of independent binary sources. In other words, the original optimization problem over all possible continuous densities has been converted to another one with finite parameters. This transformation, although seemingly obvious, is valuable as we already have powerful coding schemes over discrete sources achieving rate distortion limits with low complexity. In particular, we design two schemes for the binary source coding problem at each level.

\subsubsection{Coding with one-sided distortion}
We formulate each level as the binary source coding problem under the following one-sided distortion constraint:
\begin{equation}
d_O(x_l,\hat{x}_l)=\bold{1}_{\{x_l> \hat{x}_l\}}=\bold{1}_{\{x_l=1,\hat{x}=0\}}.
\end{equation}

Denoting the distortion at level $l$ as $d_l$, an asymmetric test channel (Z-channel) from $\hat{X}_{l}$ to $X_{l}$ can be constructed, where
\begin{align}
\Prob\{X_l=1|\hat{X}_l=0\}=\frac{d_l}{1-p_l+d_l}.\nonumber
\end{align}
Then, it is straightforward to get $p_l-\hat{p}_l=d_l$, and the achievable rate is given by
\begin{equation}
R_l=H(p_l)-(1-p_l+d_l)H\left(\frac{d_l}{1-p_l+d_l}\right).
\end{equation}

Due to the decomposability property as stated previously, the coding scheme provided will be over a set of parallel discrete levels indexed by $l=-L_1,\ldots,L_2$ correspondingly. Thus, by adopting rate distortion limit achieving codes over each level, our expansion coding scheme readily achieves the following result:
\begin{theorem}\label{thm:exp_rate_1}
For an exponential source, expansion coding achieves the rate distortion pair given by
\begin{align}
R^{(1)}&=\sum_{l=-L_1}^{L_2}R_l,\label{eqn:R1}\\
D^{(1)}&=\sum_{l=-L_1}^{L_2}2^ld_l+o( 2^{-L_2}/\lambda)+o(2^{-L_1}),\label{eqn:D1}
\end{align}
for any $L_1,L_2>0$, and $d_l\in[0,0.5]$ for $l\in\{-L_1,\cdots,L_2\}$, where $p_l$ is given by \eqref{fun:pl}.
\end{theorem}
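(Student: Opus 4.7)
The plan is to exploit the decomposability granted by Lemma~\ref{lem:exponential_expansion}, apply a near-optimal binary source code at every level in the window $\{-L_1,\ldots,L_2\}$, and then combine the per-level guarantees into a joint rate and a joint distortion, with the truncated levels treated as a small residual.

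First I would fix the window and, for each level $l\in\{-L_1,\ldots,L_2\}$, instantiate a rate-distortion achieving code for the Bernoulli$(p_l)$ source under the one-sided (Z-channel) distortion $d_O$. By the discrete rate-distortion theorem, for any $d_l\in[0,0.5]$ there exists a sequence of block codes of rate $R_l$ (as written in the statement) whose reconstructions $\hat X_{i,l}$ satisfy $\hat X_{i,l}\le X_{i,l}$ almost surely and whose average per-letter distortion tends to $d_l$. Lemma~\ref{lem:exponential_expansion} guarantees that $\{X_{i,l}\}_{l}$ are mutually independent across $l$ (for each $i$), so these codes can be executed on parallel, independent bit streams, making the total rate exactly $R^{(1)}=\sum_{l}R_l$; no single-letter coupling across levels is needed.

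Next I would set $\hat X_i=\sum_{l=-L_1}^{L_2}2^l \hat X_{i,l}$ and check feasibility and the distortion tally. Because $\hat X_{i,l}\le X_{i,l}$ for every coded level and the truncated bits $X_{i,l}$ at levels $l\notin\{-L_1,\ldots,L_2\}$ are non-negative, one immediately obtains $\hat X_i\le X_i$, so the $\infty$-branch of \eqref{fun:exp_distortion_definition} is never triggered and the distortion of \eqref{fun:exp_distortion_definition} reduces to $\tfrac{1}{n}\sum_i(X_i-\hat X_i)$. Linearity of expectation then gives
\begin{align}
\mathbb{E}[X_i-\hat X_i]=\sum_{l=-L_1}^{L_2}2^l\,\mathbb{E}[X_{i,l}-\hat X_{i,l}]+\sum_{l>L_2}2^l p_l+\sum_{l<-L_1}2^l p_l,\nonumber
\end{align}
where the first sum tends to $\sum_{l=-L_1}^{L_2}2^l d_l$ by construction. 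What remains is to show that the two tail sums have the orders claimed in the theorem, after which letting $n\to\infty$ along the jointly achievable sequence of per-level codes yields the statement.

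The only substantive step, and what I expect to be the main obstacle, is this tail analysis. For the upper tail, the bound $p_l\le e^{-\lambda 2^l}$ shows that $\sum_{l>L_2}2^l p_l$ is dominated by its first term $2^{L_2+1}e^{-\lambda 2^{L_2+1}}$, which decays doubly exponentially in $L_2$ and is therefore $o(2^{-L_2}/\lambda)$. For the lower tail, expanding $p_l=\tfrac{1}{2}-\tfrac{\lambda 2^l}{4}+O(2^{2l})$ yields $\sum_{l<-L_1}2^l p_l=2^{-L_1-1}+O(\lambda 2^{-2L_1})$, which matches the order claimed in the theorem once $o(2^{-L_1})$ is interpreted in the loose ``vanishing as $L_1\to\infty$'' sense (a strict reading would require absorbing the leading $2^{-L_1-1}$). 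Combining these estimates with the independence and feasibility arguments above completes the achievability proof of $(R^{(1)},D^{(1)})$.
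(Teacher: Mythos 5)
Your proof is correct and follows essentially the same route as the paper's: independence of the expanded levels gives the sum rate $\sum_l R_l$, and the distortion is computed by linearity as the coded-level contribution $\sum_l 2^l d_l$ plus the two truncated tails, which the paper bounds by $2^{-L_2}/\lambda$ and $2^{-L_1}$ just as you do (your feasibility check that $\hat X_i\le X_i$ is a detail the paper leaves implicit). Your side remark that the lower tail is really $\Theta(2^{-L_1})$ rather than literally $o(2^{-L_1})$ is well taken --- the paper's own proof only establishes the upper bound $2^{-L_1}$, so the $o(\cdot)$ terms in \eqref{eqn:D1} must be read as vanishing truncation errors, exactly as you interpret them.
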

\begin{proof}
See Appendix~\ref{sec:AppC}.
\end{proof}
Note that, the last two terms in \eqref{eqn:D1} are distortion resulting from the truncation (of the levels) and vanish in the limit of large number of levels. In later parts of this section, we characterize the number of levels required in order to bound the resulting distortion within a constant gap.

\subsubsection{Successive encoding and decoding} In the scheme above, we formulate each level as a Z-channel such that $X_l\geq \hat{X}_l$. However, it is not necessary to have this relationship to guarantee $X\geq \hat{X}$. To this end, we introduce successive coding scheme, where encoding and decoding start from the highest level $L_2$ to the lowest. At a certain level $l$, if all higher levels are decoded as $x_{k}=\hat{x}_{k}$ for $k>l$, then we must model level $l$ as binary source coding with a one-sided distortion (test channel is Z-channel). Otherwise, we formulate this level as binary source coding with symmetric distortion (test channel is binary symmetric channel). In particular for the later case, the distortion is Hamming distortion, i.e.
\begin{equation}
d_H(x_l,\hat{x}_l)=\bold{1}_{\{x_l\neq \hat{x}_l\}}.
\end{equation}

Denoting the equivalent distortion at level $l$ as $d_l$, i.e. $\mathbb{E}[X_l-\hat{X}_l]=d_l$, then the symmetric test channel from $\hat{X}_l$ to $X_l$ could be formulated as
\begin{equation}
\text{Pr}\{X_l=1|\hat{X}_l=0\}=\text{Pr}\{X_l=0|\hat{X}_l=1\}=\frac{d_l}{1-2p_l+2d_l}.\nonumber
\end{equation}
Hence, the achievable rate at level $l$ is given by
\begin{equation}
\bar{R}_l=H(p_l)-H\left(\frac{d_l}{1-2p_l+2d_l}\right).
\end{equation}

Based on these observations, we have the following achievable result:
\begin{theorem}\label{thm:exp_rate_2}
For exponential source, applying successive coding, expansion coding achieves the rate distortion pair given by
\begin{align}
R^{(2)}&=\sum_{l=-L_1}^{L_2}\left[q_l R_l +\left(1-q_l\right)\bar{R}_l\right],\label{eqn:R2}\\
D^{(2)}&=\sum_{l=-L_1}^{L_2}2^ld_l+o( 2^{-L_2}/\lambda)+o(2^{-L_1}),\label{eqn:D2}
\end{align}
for any $L_1,L_2>0$, and $d_l\in[0,0.5]$ for $l\in\{-L_1,\cdots,L_2\}$. Here, $p_l$ is given by \eqref{fun:pl},
and $q_l$ denotes the probability that all higher levels are encoded as equivalent and its value is given by
\begin{equation}
q_l= \prod_{k=l+1}^{L_2}(1-d_k).
\end{equation}
\end{theorem}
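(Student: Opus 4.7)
The plan is to mimic the achievability argument for Theorem~\ref{thm:exp_rate_1} but now with a successive, state-dependent encoder/decoder pair that processes levels from $L_2$ down to $-L_1$. After all levels strictly above $l$ have been coded, I define for each symbol $i \in \{1,\dots,n\}$ the state
\[
S_{i,l} \;=\; \mathbf{1}\{X_{i,k}=\hat{X}_{i,k}\ \text{for all}\ k>l\},
\]
and partition the $n$ source positions into $\mathcal{A}_l=\{i:S_{i,l}=1\}$ and its complement. On $\mathcal{A}_l$ the level-$l$ bits are compressed with a rate-$R_l$ Z-channel-based binary source code (exactly the scheme used in Theorem~\ref{thm:exp_rate_1}), guaranteeing $\hat{X}_{i,l}\leq X_{i,l}$ on those indices; on $\mathcal{A}_l^c$ they are compressed with a rate-$\bar{R}_l$ symmetric (BSC) binary source code with Hamming distortion $d_l/(1-2p_l+2d_l)$. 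Both subproblems admit low-complexity rate-distortion-limit-achieving codes (e.g., LDGM or polar), so the per-group rates are achievable for large $n$.

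Next I would identify $q_l$ by induction. Since $S_{i,l-1}=1$ iff $S_{i,l}=1$ and level $l$ matched, and conditionally on $S_{i,l}=1$ the Z-channel delivers a match with probability $1-d_l$, one obtains $q_{l-1}=q_l(1-d_l)$ and hence $q_l=\prod_{k=l+1}^{L_2}(1-d_k)$. By the law of large numbers $|\mathcal{A}_l|/n\to q_l$, so the per-symbol expected rate is $\sum_l [q_l R_l+(1-q_l)\bar{R}_l]$, which is \eqref{eqn:R2}. For the distortion, linearity and the fact that in \emph{both} test channels $\mathbb{E}[X_l-\hat{X}_l]=p_l-\hat{p}_l=d_l$ (by design in the Z-channel case, and by matching marginals in the BSC case) give $\mathbb{E}[X-\hat{X}]=\sum_{l=-L_1}^{L_2}2^l d_l$ after expansion; the two residual truncation terms $o(2^{-L_2}/\lambda)$ and $o(2^{-L_1})$ are handled exactly as in Theorem~\ref{thm:exp_rate_1}, yielding \eqref{eqn:D2}.

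Finally I would verify the one-sided constraint $\hat{X}_i\leq X_i$ pathwise. If no mismatch ever occurs, $\hat{X}_i=X_i$. Otherwise, let $k$ be the highest level at which $X_{i,k}\neq\hat{X}_{i,k}$; by construction the encoder was in state $S_{i,k}=1$ at level $k$, so the Z-channel was used and $X_{i,k}=1,\hat{X}_{i,k}=0$. Even if every lower level later flips in the adverse direction, the cumulative difference is bounded below by
\[
2^k-\sum_{l=-L_1}^{k-1}2^l \;=\; 2^{-L_1} \;>\; 0,
\]
so the one-sided distortion is always finite and the expected-distortion computation above is the relevant one.

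The main obstacle is the state-dependent coding step: the sets $\mathcal{A}_l$ have random sizes driven by all previously processed levels, so to conclude that the averaged rate indeed converges to $q_l R_l+(1-q_l)\bar{R}_l$ I would need concentration of $|\mathcal{A}_l|/n$ around $q_l$ and the existence of good binary source codes at rates arbitrarily close to $R_l$ and $\bar{R}_l$ for every sub-block size. Once these are in place, summing the per-level contributions produces the claimed rate-distortion pair.
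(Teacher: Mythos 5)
Your proposal is correct and follows essentially the same route as the paper's proof: successive coding from level $L_2$ downward, a Z-channel code when all higher levels matched and a symmetric (BSC) code otherwise, the recursion $q_{l-1}=q_l(1-d_l)$ from the per-level mismatch probability $d_l$ together with independence across levels, the rate as the convex combination $q_lR_l+(1-q_l)\bar{R}_l$, and the distortion computed as in Theorem~\ref{thm:exp_rate_1} since $\mathbb{E}[X_l-\hat{X}_l]=d_l$ under both test channels. Your additional steps --- the pathwise verification that $X_i-\hat{X}_i\geq 2^{-L_1}>0$ whenever a mismatch occurs (needed for the one-sided distortion to be finite) and the concentration of $|\mathcal{A}_l|/n$ around $q_l$ --- are details the paper's very terse appendix leaves implicit, and they are handled correctly.
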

\begin{proof}
See Appendix~\ref{sec:AppD}.
\end{proof}

In this sense, the achievable pairs in Theorem~\ref{thm:exp_rate_1} and \ref{thm:exp_rate_2} are both given by optimization problems over a set of parameters $\{d_{-L_1},\ldots,d_{L_2}\}$. However, the problems are not convex, and effective theoretical analysis or numerical calculation cannot be adopted here for an optimal solution. But, by a heuristic choice of $d_l$, we can still get a good performance. Inspired from the fact that the optimal scheme models noise as exponential with parameter $1/D$ in test channel, we design
\begin{eqnarray}
d_l=\frac{1}{1+e^{2^l/D}}.\label{eqn:dl}
\end{eqnarray}

We note that higher levels get higher priority and
lower distortion with this choice, which is consistent with the intuition.
Then, the proposed expansion coding scheme provably approaches the rate distortion function for the whole distortion within a small constant gap.
\begin{theorem}\label{thm:exp_bound}
For any $D\in[0,1/\lambda]$, there exists a constant $c>0$, such that for $L_1,L_2>-\log (\lambda D)$, the achievable rate pairs obtained from expansion coding schemes are both within $c$ bit gap to Shannon rate distortion function, i.e.
$$R^{(1)}-R(D^{(1)})\leq c,$$
$$R^{(2)}-R(D^{(2)})\leq c,$$
where $D^{(1)}$ and $D^{(2)}$ are given by \eqref{eqn:D1} and \eqref{eqn:D2} respectively, with a choice of $d_l$ as in \eqref{eqn:dl}.
\end{theorem}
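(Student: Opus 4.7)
The plan is to bound both the distortion inflation $D^{(1)}/D$ and the rate excess $R^{(1)} - R(D)$ by absolute constants and then combine them via the monotonicity $R(D^{(1)}) = R(D) - \log(D^{(1)}/D)$. The bound for $R^{(2)}$ will piggyback on the bound for $R^{(1)}$.

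For the distortion, since $L_1, L_2 > -\log(\lambda D)$, the truncation residuals $o(2^{-L_2}/\lambda) + o(2^{-L_1})$ are already $O(D)$. The main term $\sum_{l=-L_1}^{L_2} 2^l d_l$ becomes $D \sum_l u_l/(1+e^{u_l})$ under the substitution $u_l := 2^l/D$, a geometric progression of ratio $2$; comparing to the convergent integral $\int_0^\infty du/(1+e^u) = \ln 2$ in the logarithmic variable shows this sum is bounded by an absolute constant (depending only on the fractional part of $\log_2 D$). Hence $D^{(1)} \leq \kappa D$ for some universal $\kappa$.

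The main technical step is showing $R^{(1)} \leq -\log(\lambda D) + c'$. I would decompose $R_l = [H(p_l) - H(d_l)] + A_l$ where $A_l := H(d_l) - (1-p_l+d_l) H(d_l/(1-p_l+d_l)) \geq 0$ is the excess of the Z-channel rate over the Hamming-rate formula. Defining $\phi(x) := H(1/(1+e^x))$, so that $H(p_l) = \phi(\lambda 2^l)$ and $H(d_l) = \phi(2^l/D)$, the primary sum has a self-similar structure: an index shift $l \mapsto l+1$, together with $\phi(0)=1$ and $\phi(\infty)=0$, yields $\sum_l \phi(2c \cdot 2^l) = \sum_l \phi(c \cdot 2^l) - 1$ up to truncation tails. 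Iterating this $\lfloor -\log_2(\lambda D)\rfloor$ times, and interpolating for the fractional part, delivers $\sum_l [\phi(\lambda 2^l) - \phi(2^l/D)] = -\log_2(\lambda D) + O(1)$. For the correction, a Taylor expansion of $(1-p+d) H(d/(1-p+d))$ around $p = d$ yields $A_l = O(d_l (p_l - d_l))$ in the small-parameter regime and $A_l = O(d_l)$ when $p_l$ is close to $1/2$; since $d_l$ decays doubly exponentially once $2^l/D \geq 1$ and $A_l \to 0$ as both $p_l, d_l \to 1/2$, we obtain $\sum_l A_l = O(1)$. Combining the two pieces gives $R^{(1)} \leq -\log(\lambda D) + c'$.

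Finally, $R(D^{(1)}) \geq R(D) - \log \kappa$, and therefore $R^{(1)} - R(D^{(1)}) \leq c' + \log \kappa =: c$, which is the claim for Theorem~\ref{thm:exp_rate_1}. For Theorem~\ref{thm:exp_rate_2}, the binary-symmetric test channel is a relaxation of the Z-channel at the same expected-difference constraint $d_l$, so $\bar R_l \leq R_l$ at every level; the convex combination in \eqref{eqn:R2} then gives $R^{(2)} \leq R^{(1)}$, and since $D^{(2)} = D^{(1)}$ by \eqref{eqn:D2}, the same constant $c$ applies. The principal obstacle is the self-similarity argument for $\sum_l[H(p_l) - H(d_l)]$: one must carefully absorb the boundary contributions at $l = -L_1, L_2$ and the non-integer part of $-\log_2(\lambda D)$ into $O(1)$, and the pointwise bounds on $A_l$ must be uniform enough to survive summation across the transition range $l \in [\log_2 D, \log_2(1/\lambda)]$, where $p_l$ is still near $1/2$ but $d_l$ is already transitioning toward $0$.
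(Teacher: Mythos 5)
Your proposal follows essentially the same route as the paper's proof: the same decomposition $R_l = [H(p_l)-H(d_l)] + A_l$, the same shift-and-telescope argument exploiting $d_l = p_{l+\gamma}$ with $\gamma=-\log(\lambda D)$ to bound the first sum by $\gamma$, the same two-regime geometrically decaying bounds on $\sum_l A_l$ about the transition level $l=-\gamma$, and the identical $\bar{R}_l \le R_l$ plus $D^{(1)}=D^{(2)}$ argument for $R^{(2)}$. The only differences are cosmetic --- the paper obtains $\sum_l 2^l d_l \le D$ exactly from Lemma~\ref{lem:exponential_expansion} rather than by integral comparison, and near the transition the usable pointwise bound is $A_l = O(p_l-d_l)$ rather than $O(d_l)$ (a sketch-level slip you effectively correct by noting $A_l\to 0$ as $p_l,d_l\to 1/2$) --- so the approach is sound and matches the paper's.
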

\begin{proof}
See Appendix~\ref{sec:AppE}.
\end{proof}

\subsection{Numerical Result}
Numerical results showing achievable rates along with the rate distortion limit are plotted in Fig.~\ref{fig:Exp_Rate}. It is evident that both forms of  expansion coding  perform within a constant gap of the limit. Theorem~\ref{thm:exp_bound} showcases that this gap is bounded by a constant. Here, numerical results show that the gap is not necessarily as wide as predicted by the analysis. Specially in the low distortion region, the gap is numerically found to correspond to 0.24 bit and 0.43 bit for each coding scheme respectively.

\begin{figure}[t]
 \centering
 \includegraphics[width=\columnwidth]{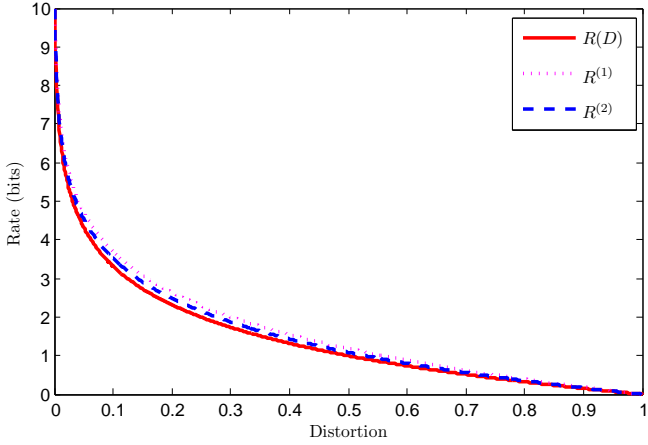}
 \caption{{\bf Achievable rate distortion pairs using expansion coding for exponential distribution with one-sided error distortion.} In this numerical result, we set $\lambda=1$. $R(D)$ (red-solid) is rate distortion limit; $R^{(1)}$ (purple-dotted) is the achievable rate given by Theorem~\ref{thm:exp_rate_1}; $R^{(2)}$ (blue-dashed) is the achievable rate given by Theorem~\ref{thm:exp_rate_2}.
}
\label{fig:Exp_Rate}
\end{figure}


\section{Laplacian Source Coding}

\subsection{Problem Setup}
In this section, we focus on Laplacian source coding.
Consider another i.i.d. exponential source $X_1,X_2,\ldots,X_n$, i.e. omitting index $i$, the probability density function is given by
\begin{equation}
f_X(x)=\frac{\lambda}{2}e^{-\lambda |x|},\quad x\in\mathbb{R},\label{fun:laplace_pdf}
\end{equation}
where $\lambda$ is the parameter of Laplace distribution, i.e. $\mathbb{E}[|X|]=1/\lambda$. Distortion measure here is absolute value error distortion, i.e.
\begin{equation}
d(x^n,\hat{x}^n)=\frac1n\sum_{i=1}^n|x_i-\hat{x}_i|.\label{fun:distortion_definition}
\end{equation}

\begin{lemma}[\cite{Cover:Laplace91}]\label{lem:laplace_rate_distortion}
The rate distortion function for Laplacian source with parameter $\lambda$ with absolute error distortion is given by
\begin{align}
R(D)=\left\{\begin{array}{ll}
-\log  (\lambda D), &0\leq D\leq \frac{1}{\lambda},\\
0,&D>\frac{1}{\lambda}.
\end{array}
\right.
\end{align}
Moreover, the optimal conditional distribution is
\begin{align}
f^*_{X|\hat{X}}(x|\hat{x})=\frac{1}{2D} e^{- |x-\hat{x}|/D},\quad x,\hat{x}\in\mathbb{R}.\label{fun:lap_optimal_conditional}
\end{align}
\end{lemma}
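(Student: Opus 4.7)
The plan is to mirror the proof of Lemma~\ref{lem:exp_rate_distortion}, combining a Shannon lower bound for the converse with a matching additive-noise test channel for achievability. Two maximum-entropy facts drive the argument: among all densities on $\mathbb{R}$ with $\mathbb{E}|X|\leq 1/\lambda$, the Laplacian with parameter $\lambda$ is the unique maximizer of differential entropy, yielding $h(X) = \log(2e/\lambda)$; analogously, among densities on $\mathbb{R}$ with $\mathbb{E}|Z|\leq D$, the Laplacian with parameter $1/D$ maximizes entropy at $\log(2eD)$.

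For the converse, I would chain these facts through the Shannon lower bound:
\begin{align}
I(X;\hat{X}) &= h(X) - h(X-\hat{X}\mid\hat{X}) \nonumber \\
&\geq h(X) - h(X-\hat{X}) \nonumber \\
&\geq \log(2e/\lambda) - \log(2eD) = -\log(\lambda D), \nonumber
\end{align}
using translation invariance of differential entropy in the first line, the fact that conditioning reduces entropy in the second, and the two maximum-entropy facts above in the third. The regime $D > 1/\lambda$ is handled trivially by taking $\hat{X}\equiv 0$, giving $R(D) = 0$.

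For achievability I would specify the backward test channel directly: let $Z\sim\mathrm{Laplacian}(1/D)$ be independent of a reproduction variable $\hat{X}$, and set $X = \hat{X}+Z$. This automatically forces $\mathbb{E}|X-\hat{X}| = D$ and reproduces the conditional form \eqref{fun:lap_optimal_conditional}. The nontrivial step is choosing the marginal of $\hat{X}$ so that $X$ has the prescribed Laplacian$(\lambda)$ law. Characteristic functions reduce this to algebra: with $\phi_X(t) = \lambda^2/(\lambda^2+t^2)$ and $\phi_Z(t) = 1/(1+D^2 t^2)$, dividing yields
\begin{equation}
\phi_{\hat{X}}(t) \;=\; (\lambda D)^2 + \bigl(1-(\lambda D)^2\bigr)\cdot\frac{\lambda^2}{\lambda^2+t^2}, \nonumber
\end{equation}
which for $D\leq 1/\lambda$ is a convex combination of the characteristic function of a point mass at $0$ and that of a Laplacian$(\lambda)$ density, hence a legitimate mixture law on $\hat{X}$.

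The main obstacle I anticipate is justifying that this atomic-plus-continuous reproduction distribution indeed attains equality in the Shannon lower bound, since $\hat{X}$ has a singular component at the origin. The cleanest route is to observe that the characteristic-function factorization $\phi_X = \phi_{\hat{X}}\phi_Z$ guarantees independence of $\hat{X}$ and $Z$ under the constructed joint, after which $I(X;\hat{X}) = h(X) - h(X\mid\hat{X}) = h(X) - h(Z) = \log(2e/\lambda) - \log(2eD) = -\log(\lambda D)$ matches the converse and closes the two-sided bound. The full details would then be relegated to an appendix following the template of Appendix~\ref{sec:AppB}.
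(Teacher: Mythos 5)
Your proposal is correct and follows essentially the same route as the paper's Appendix~\ref{sec:AppF}: the converse is the identical Shannon-lower-bound chain through the two maximum-entropy facts, and achievability is the additive Laplacian$(1/D)$ test channel. The one place you go further is worth noting: the paper simply asserts that such a test channel can be designed (``we can design a test channel from $\hat{X}$ to $X$ with additive noise\dots''), whereas you actually verify realizability by dividing characteristic functions and exhibiting $\phi_{\hat{X}}(t)=(\lambda D)^2+\bigl(1-(\lambda D)^2\bigr)\frac{\lambda^2}{\lambda^2+t^2}$ as a legitimate mixture of a point mass at $0$ and a Laplacian$(\lambda)$ law for $D\leq 1/\lambda$ --- this is the step the paper leaves implicit, and it is where the construction could in principle fail (and does fail for $D>1/\lambda$, consistent with the rate being $0$ there). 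One small phrasing correction: independence of $\hat{X}$ and $Z$ is not a consequence of the factorization $\phi_X=\phi_{\hat{X}}\phi_Z$; it is imposed by construction, and the factorization is what guarantees that $X=\hat{X}+Z$ then has the prescribed Laplacian$(\lambda)$ marginal. With that reading, $I(X;\hat{X})=h(X)-h(Z)=-\log(\lambda D)$ meets the converse and the proof closes.
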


\begin{proof}
The proof is given by \cite{Cover:Laplace91}, where the noise in test channel is given by Laplacian with parameter $1/D$.
See also Appendix~\ref{sec:AppF}.
\end{proof}

\subsection{Expansion Coding}
By noting that Laplacian is two-sided exponential, the expansion of source and estimate over levels ranging from $-L_1$ to $L_2$ can be expressed as
\begin{align}
&X_i=X_i^{\text{sign}}\sum_{l=-L_1}^{L_2}2^lX_{i,l},\quad i=1,2,\ldots,n,\label{fun:expansion_source}\\
&\hat{X}_i=\hat{X}_i^{\text{sign}}\sum_{l=-L_1}^{L_2}2^l \hat{X}_{i,l},\quad i=1,2,\ldots,n, \label{fun:estimate_expansion}
\end{align}
where $X_i^{\text{sign}}$ and $\hat{X}_i^{\text{sign}}$ represent the sign of $X_i$ and $\hat{X}_i$ correspondingly, both random variables uniformly distributed from $\{-1,+1\}$.

In a manner similar to exponential source coding case, expansion reduces the original problem to coding for a set of independent binary sources. However, particularly for Laplacian case, we let $X^{\text{sign}}=\hat{X}^{\text{sign}}$, i.e. using 1 bit to perfectly recover the sign bit, and then for other levels, we formulate each as a binary source coding with Hamming distortion. In particular, for level $l$, we design a symmetric test channel from $\hat{X}_l$ to $X_l$, where the cross probability is given by
    \begin{equation}
    d_l=\frac{p_l-\hat{p}_l}{1-2\hat{p}_l}.\label{fun:d_l}
    \end{equation}
    Then, the achievable rate at level $l$ is given by
\begin{equation}
R_l=H(p_l)-H(d_l).
\end{equation}
We have the following result.
\begin{theorem}\label{thm:laplace_rate}
For Laplacian source $X$, expansion coding, where the estimate $\hat{X}$ is constructed as in \eqref{fun:estimate_expansion},
achieves the rate distortion pair $(R,D)$ with
\begin{equation}
R=1+\sum_{l=-L_1}^{L_2}\left[H(p_l)-H(d_l)\right],\label{fun:laplace_achievable_rate}
\end{equation}
for any $L_1,L_2>0$ and $d_l$ such that $\mathbb{E}[|X-\hat{X}|]\leq D$.
\end{theorem}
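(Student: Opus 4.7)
The plan is to reduce the Laplacian source coding problem to one sign bit plus a collection of independent parallel binary source coding subproblems with Hamming distortion, and then sum the resulting achievable rates.

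First I would decompose $X = X^{\text{sign}} \cdot |X|$, where $|X|$ is exponential with parameter $\lambda$ and $X^{\text{sign}}$ is uniform on $\{-1,+1\}$, independent of $|X|$. Applying Lemma~\ref{lem:exponential_expansion} to the magnitude $|X|$ yields the expansion \eqref{fun:expansion_source}, in which $\{X_l\}_{l=-L_1}^{L_2}$ are mutually independent Bernoulli random variables with $X_l \sim \text{Bern}(p_l)$ and $p_l$ given by \eqref{fun:pl}. The encoder sets $\hat{X}^{\text{sign}} = X^{\text{sign}}$, which consumes one bit per source symbol in the rate budget since the sign is a fair coin flip, and then separately encodes each level into $\hat{X}_l \sim \text{Bern}(\hat{p}_l)$ using an independent binary source code.

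Next I would invoke the classical binary rate distortion theorem on each level. Picking a symmetric test channel from $\hat{X}_l$ to $X_l$ with crossover $d_l$ is consistent with the marginals $(p_l, \hat{p}_l)$ precisely when \eqref{fun:d_l} holds, and under Hamming distortion it achieves per-level rate $R_l = H(p_l) - H(d_l)$ whenever $d_l \in [0,\min(p_l, 1-p_l)]$. Because Lemma~\ref{lem:exponential_expansion} guarantees independence across levels, encoding each level with its own rate-distortion-limit-approaching code on $n$ i.i.d.\ copies makes the total rate additive, yielding the claimed $R = 1 + \sum_{l=-L_1}^{L_2} [H(p_l) - H(d_l)]$.

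Finally I would verify the distortion constraint. Since $\hat{X}^{\text{sign}} = X^{\text{sign}}$, the error factors as $X - \hat{X} = X^{\text{sign}} \sum_l 2^l (X_l - \hat{X}_l)$, so $|X - \hat{X}| \leq \sum_l 2^l |X_l - \hat{X}_l|$ by the triangle inequality, plus a truncation contribution of expected order $o(2^{-L_2}/\lambda) + o(2^{-L_1})$ from the levels outside $\{-L_1,\ldots,L_2\}$. Taking expectations and using $\mathbb{E}|X_l - \hat{X}_l| = \Prob\{X_l \neq \hat{X}_l\} = d_l$ bounds $\mathbb{E}[|X-\hat{X}|]$ by $\sum_l 2^l d_l$ plus the vanishing truncation term, and the theorem simply asserts achievability of any pair $(R,D)$ for which the hypothesis $\mathbb{E}[|X-\hat{X}|] \le D$ is satisfied.

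The main obstacle I anticipate is admissibility of the per-level test channels: the chosen $d_l$ must lie in $[0,\min(p_l,1-p_l)]$ so that both $R_l \geq 0$ and $\hat{p}_l \in [0,1]$ in \eqref{fun:d_l}; otherwise the corresponding level must be handled trivially (e.g.\ by setting $R_l = 0$ and accepting a larger contribution to the distortion). A secondary subtlety is that the sign subproblem and the level subproblems are realized jointly on the same source block, so one needs a standard product-code construction to certify that the one extra bit for the sign and the level rates genuinely add without cross-interference.
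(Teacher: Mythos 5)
Your construction coincides with the paper's: the sign bit is sent losslessly (accounting for the leading $1$ in the rate), each expanded level is compressed as an independent $\mathrm{Bern}(p_l)$ source under Hamming distortion via the symmetric test channel consistent with \eqref{fun:d_l}, and the per-level rates $H(p_l)-H(d_l)$ add by the independence guaranteed by Lemma~\ref{lem:exponential_expansion}. The only difference is cosmetic: you bound $\mathbb{E}[|X-\hat{X}|]$ by the triangle inequality, whereas the paper tracks the exact distortion through the recursion for $\mathcal{D}_k$; since the theorem's hypothesis is phrased directly as $\mathbb{E}[|X-\hat{X}|]\leq D$, either accounting suffices.
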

The absolute value error distortion $\mathbb{E}[|X-\hat{X}|]$ cannot be written as simple weighted sum of Hamming distortions from each level.
In fact, we have to use an induction method to characterize the complicated relation. Denote
\begin{align}
\mathcal{D}_{k}\triangleq \mathbb{E}\left[\left|\sum_{l=-L_1}^{k}2^l(X_l-\hat{X}_l)\right|\right],\label{fun:D_k}
\end{align}
for any $-L_1\leq k \leq L_2$, which represents the accumulative absolute value distortion up to level $k$.
\begin{itemize}
\item Initialization: at level $-L_1$,
\begin{equation}
\mathcal{D}_{-L_1}=2^{-L_1}d_{-L_1}. \nonumber
\end{equation}
\item Induction: for levels $-L_1+1\leq k\leq L_2$,

\begin{equation}
\hskip-3em \mathcal{D}_{k}=\mathcal{D}_{k-1}(1-d_{k})+2^{k}d_k+\frac{2^{k}d_{k}(1-2p_{k})}{1-2d_{k}}\sum_{l=-L_1}^{k-1}\frac{2^{l}d_{l}(1-2p_{l})}{1-2d_{l}}.\nonumber
\end{equation}
\end{itemize}

To this end, the expansion based coding scheme can be clearly expressed as an optimization problem with variables $\{d_{-L_1},\ldots,d_{L_2}\}$, but not convex. We have to step back to heuristically choose the value of $d_l$s in order to get a suboptimal result.
More precisely, for an aiming distortion $D$, we construct a set of distortions $d_l$ at each level,
\begin{eqnarray}
d_l=\frac{1}{1+e^{2^l/D}}.\label{eqn:dlLaplace}
\end{eqnarray}
Then by Theorem \ref{thm:laplace_rate} and iterative algorithm to calculate the real distortion $\mathcal{D}_{L_2}$, we are ready to claim that the rate distortion pair $(R^{(1)},D^{(1)})$ is achievable, where
\begin{align}
R^{(1)}&=1+\sum_{l=-L_1}^{L_2}\left[H(p_l)-H(d_l)\right],\\
D^{(1)}&=\mathcal{D}_{L_2}.
\end{align}

Evidently, this coding scheme may not behave well at high distortion region, since $R^{(1)}$ is at least 1. In the high-distortion regime, precisely compressing the sign bit seems inefficient. To this end, a time sharing scheme is utilized to reduce the gap in high distortion region. More precisely, for any $\alpha\in[0,1]$, we compress $\alpha$ fraction of source sequences into codeword 0, then the following rate distortion pair is found to be achievable:
\begin{align}
R^{(2)}&=(1-\alpha)R^{(1)},\\
D^{(2)}&=(1-\alpha)D^{(1)}+\alpha/\lambda.
\end{align}

The following theorem provides an upper bound on rate distortion gap of expansion coding scheme.
\begin{theorem}\label{thm:laplace_bound}
For any $D\in[0,1/\lambda]$, with a choice of $d_l$ in \eqref{eqn:dlLaplace} and $L_1,L_2>-\log(\lambda D)$, the achievable rate distortion pairs $(R^{(1)},D^{(1)})$ and $(R^{(2)},D^{(2)})$ obtained from expansion schemes above are within $1$ bit gap to Shannon rate distortion function, i.e.
\begin{align}
&R^{(1)}-R(D^{(1)})\leq 1,\nonumber\\
&R^{(2)}-R(D^{(2)})\leq 1.\nonumber
\end{align}
\end{theorem}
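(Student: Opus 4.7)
The plan is to reduce $R^{(1)} - R(D^{(1)})$ to a form in which the decomposability identity of Lemma~\ref{lem:exponential_expansion} applies to both sides. Since $R(D^{(1)}) = -\log(\lambda D^{(1)})$, the quantity to bound is
\[
R^{(1)} - R(D^{(1)}) = 1 + \sum_{l=-L_1}^{L_2}[H(p_l) - H(d_l)] + \log(\lambda D^{(1)}).
\]
The structural observation driving the proof is that the choice $d_l = 1/(1+e^{2^l/D})$ in \eqref{eqn:dlLaplace} is, by Lemma~\ref{lem:exponential_expansion}, precisely the expansion of an exponential random variable with mean $D$, while $\{p_l\}$ is the expansion of the source itself. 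Both entropy sums then admit a clean interpretation as entropies of truncated exponentials.

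For the rate term, I would apply the quantization--entropy identity to each expansion. For $L_1, L_2 > -\log(\lambda D)$, the truncated sum $\sum_{l=-L_1}^{L_2} 2^l X_l$ approximates $X \sim \text{exp}(\lambda)$ at scale $2^{-L_1}$, so its discrete entropy equals $h(X) + L_1 = \log(e/\lambda) + L_1$ up to terms vanishing in $L_1, L_2$. Applying this identity to both $\{p_l\}$ and $\{d_l\}$ and subtracting cancels the $L_1$ offset, yielding
\[
\sum_{l=-L_1}^{L_2}[H(p_l) - H(d_l)] = \log(e/\lambda) - \log(eD) + o(1) = R(D) + o(1).
\]

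For the distortion term, I would unfold the recursion for $\mathcal{D}_k$. Since the per-level Bernoulli noise $X_l \oplus \hat{X}_l$ has parameter $d_l$ matching the expansion of an exponential$(1/D)$ variable, the accumulated $\mathcal{D}_{L_2}$ should converge to the mean $D$ of that variable. I expect this distortion estimate to be the main obstacle: the recursion has no tractable closed form, and the cross-term $\tfrac{2^{k}d_{k}(1-2p_{k})}{1-2d_{k}}\sum_{l<k}\tfrac{2^{l}d_{l}(1-2p_{l})}{1-2d_{l}}$ must be bounded uniformly in $k$ by exploiting the rapid decay of $d_l$ for $l > \log D$ and of $1-2p_l$ for $l > \log(1/\lambda)$, so that its total contribution is only $o(D)$. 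Combining with the rate estimate then gives $R^{(1)} - R(D^{(1)}) \leq 1 + o(1)$, which is $\leq 1$ once $L_1, L_2$ sit just past $-\log(\lambda D)$.

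For the time-sharing pair $(R^{(2)}, D^{(2)})$, I would jointly choose the design parameter $D$ and the sharing fraction $\alpha$ so that the chord from $(R^{(1)}, D^{(1)})$ to the trivial pair $(0, 1/\lambda)$ meets the target distortion. Using the first bound $R^{(1)} \leq R(D^{(1)}) + 1$ together with the explicit form $R(D) = -\log(\lambda D)$, a direct comparison of $(1-\alpha) R^{(1)}$ with $-\log(\lambda[(1-\alpha) D^{(1)} + \alpha/\lambda])$ via concavity of $\log$ keeps the chord within one bit, provided $D^{(1)}$ is not taken too small relative to the target. Tuning this relationship to the appropriate distortion regime is the remaining delicate point of the argument.
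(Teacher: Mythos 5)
Your decomposition of the gap as $R^{(1)}-R(D^{(1)})=1+\sum_{l}[H(p_l)-H(d_l)]+\log(\lambda D^{(1)})$ matches the paper's starting point, but the two estimates you then propose each have a genuine gap. For the rate term, the quantization--entropy identity only gives $\sum_l[H(p_l)-H(d_l)]=R(D)+o(1)$, with no control on the sign of the $o(1)$, and the theorem must hold for \emph{every} $L_1,L_2>-\log(\lambda D)$, not just asymptotically; your closing step ``$1+o(1)\leq 1$ once $L_1,L_2$ sit just past the threshold'' is exactly where this breaks (an $o(1)$ error is small for \emph{large} truncation, and even then need not be nonpositive). The paper avoids this entirely by using the exact shift identity $d_l=p_{l+\gamma}$ with $\gamma=-\log D$, so the two entropy sums telescope to $\sum_{l=-L_1}^{-L_1+\gamma-1}H(p_l)-\sum_{l=L_2+1}^{L_2+\gamma}H(p_l)\leq\gamma$, each term being at most one bit --- an exact inequality valid at any truncation. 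For the distortion term, you have identified the recursion for $\mathcal{D}_k$ as the main obstacle, but it is a red herring for this bound: the triangle inequality gives
\begin{align}
D^{(1)}=\mathbb{E}\left[\left|\sum_{l=-L_1}^{L_2}2^l(X_l-\hat{X}_l)\right|\right]\leq\sum_{l=-L_1}^{L_2}2^l d_l\leq\sum_{l=-\infty}^{\infty}2^l d_l=D,\nonumber
\end{align}
where the last equality is Lemma~\ref{lem:exponential_expansion} applied to your own observation that $\{d_l\}$ expands an exponential of mean $D$. This is an inequality in the direction you need ($\log(\lambda D^{(1)})\leq\log(\lambda D)$), whereas showing $\mathcal{D}_{L_2}\to D$ would at best recover the bound with an uncontrolled error. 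The recursion is only needed to \emph{compute} the achieved distortion, not to bound it.

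For the time-sharing pair, no tuning of $\alpha$ or of the design distortion is required, and concavity of $\log$ is not the right tool. Since $(R^{(2)},D^{(2)})$ is the convex combination $(1-\alpha)(R^{(1)},D^{(1)})+\alpha(0,1/\lambda)$, convexity of $R(\cdot)$ gives $R(D^{(2)})\leq(1-\alpha)R(D^{(1)})+\alpha R(1/\lambda)$ is the wrong direction --- what you actually use is that the chord of a convex function lies \emph{above} the function, so $(1-\alpha)R(D^{(1)})+\alpha\cdot 0\geq R(D^{(2)})$, hence $R^{(2)}=(1-\alpha)R^{(1)}\leq(1-\alpha)R(D^{(1)})+(1-\alpha)\leq R(D^{(2)})+1$ unconditionally. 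Your caveat ``provided $D^{(1)}$ is not taken too small relative to the target'' signals that this clean argument was not yet in hand.
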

\begin{proof}
See Appendix~\ref{sec:AppG}.
\end{proof}

\subsection{Numerical Result}

We find that the expansion coding scheme is provably within 1 bit constant gap of the rate distortion function. Here, the calculation of $R^{(1)}$ is fairly tight, however, the upper bound on $D^{(1)}$ could prove to be loose, especially in low distortion region. Since the calculation of $D^{(1)}$ from $d_l$s is non-trivial, it is hard to characterize the extent to which the overall distortion is overestimated by the bound. Thus, we turn to numerical results to find   this gap to be  0.52 bits in the low distortion regime (shown in Fig.~\ref{fig:Rate}).

\begin{figure}[t]
 \centering
 \includegraphics[width=\columnwidth]{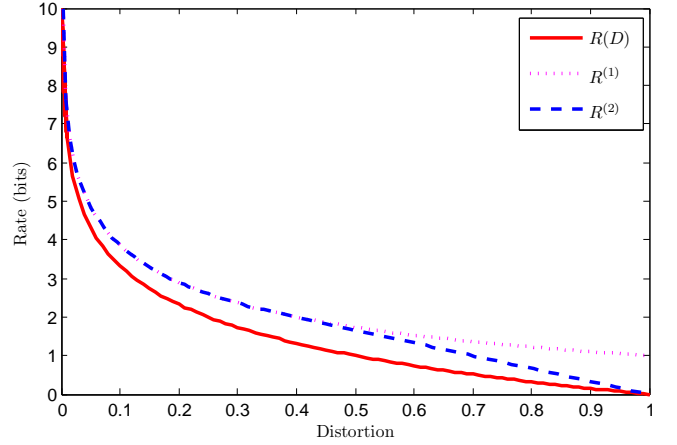}
 \caption{{\bf Achievable rate distortion pairs using expansion coding.} In this numerical result, we set $\lambda=1$. $R(D)$ (red-solid) is rate distortion limit; $R^{(1)}$ (purple-dotted) is achievable rate using expansion coding: and $R^{(2)}$ (blue-dashed) is achievable rate using expansion coding and time sharing.
}
\label{fig:Rate}
\end{figure}
%
%


\section{Discussion}

Expansion coding enables construction of ``good" lossy compression codes for exponential and Laplacian sources using  discrete-valued parallel source codes. Theoretical analysis and numerical results illustrate that expansion coding performs within a constant gap of the rate distortion limit, and therefore, approaches the rate distortion limit in ratio, in the low distortion regime.

One significant benefit from expansion coding is  coding complexity. As indicated in theoretical analysis, approximately $-2\log (\lambda D)$ number of levels are sufficient for the coding scheme  as presented and studied in the paper. Thus, by choosing ``good" low complexity  codes within each level (such as source coding with polar codes \cite{Arikan:Channel08}, \cite{Korada:Source10}), the overall complexity of the coding scheme can be easily characterized, resulting in a low-complexity net code for the original continuous-valued source coding problem.

Although the paper focuses primarily on binary expansion case, our results can be generalized to $q$-array expansion case, with similar performance guarantees. Moreover, we focus on exponential and Laplacian sources due to their decomposable property. As we can imagine, all decomposable distributions can be treated in a similar way to result in parallel problems. Even for indecomposable distributions, such as a Gaussian, the expansion coding scheme presents a means of developing low-complexity coding schemes for these types of sources.






\appendices
\section{Proof to Lemma~\ref{lem:exponential_expansion}}
\label{sec:AppA}

The ``if'' part follows by extending the one given in
\cite{Marsaglia:Random71}, which considers the expansion
of a truncated exponential random variable.
We show the result by calculating the
moment generating function of $B$.
Using the assumption that $\{B_l\}_{l\in\mathbb{Z}}$ are
mutually independent, we have
\begin{align}
M_B(t)  =\EE[e^{tB}]
        =\prod_{l=-\infty}^{\infty}\EE\left[e^{t2^l B_l}\right].\nonumber
\end{align}
Note that for any $l\in\mathbb{Z}$,
\begin{align}
\EE\left[e^{t2^l B_l}\right]=
\frac{e^{t2^l}}{1+e^{\lambda 2^l}}+\left(1-\frac{1}{1+e^{\lambda 2^l}}\right)
=\frac{1+e^{(t-\lambda ) 2^l}}{1+e^{-\lambda 2^l}}.\nonumber
\end{align}
Then, using the fact that for any constant $\alpha\in\mathbb{R}$,
\begin{align}
\prod_{l=0}^{n}(1+e^{\alpha 2^l})=\frac{1-e^{2^{n+1}\alpha}}{1-e^{\alpha}},\nonumber
\end{align}
we can obtain the following for $t<\lambda$,
\begin{align}
\prod_{l=0}^{\infty} \EE\left[e^{t2^l B_l}\right]
=\lim_{n\rightarrow\infty}\prod_{l=0}^{n}\frac{1+e^{(t-\lambda ) 2^l}}{1+e^{-\lambda 2^l}}
= \frac{1-e^{-\lambda}}{1-e^{t-\lambda}}.\label{eqn:part1}
\end{align}
And, similarly, for the negative part, we have
\begin{align}
\prod_{l=-n}^{-1}(1+e^{\alpha 2^l})=\frac{1-e^{\alpha}}{1-e^{\alpha2^{-n}}},\nonumber
\end{align}
which further implies that
\begin{align}
\prod_{i=-\infty}^{-1}\EE\left[e^{t2^i B_l}\right]
=&\lim_{n\rightarrow\infty}\frac{1-e^{t-\lambda}}{1-e^{(t-\lambda)2^{-n}}}\frac{1-e^{-\lambda2^{-n}}}{1-e^{-\lambda}}\nonumber\\
=&\frac{\lambda(1-e^{t-\lambda})}{(\lambda-t)(1-e^{-\lambda})}.
\label{eqn:part2}
\end{align}

Thus, finally for any $t<\lambda$, combining equations~(\ref{eqn:part1}) and~(\ref{eqn:part2}), we get
\begin{align}
M_B(t)=\frac{\lambda}{\lambda-t}.\nonumber
\end{align}
The observation that this is the moment generation function for an exponentially distributed random variable with parameter $\lambda$ concludes the proof.

The independence relationships between levels in ``only if'' part can be simply verified using memoryless property of exponential distribution. Here we just
need to show the parameter for Bernoulli random variable at each level. Observe that for any $l\in\mathbb{Z}$,
\begin{align}
\Prob\{B_l=1\}=\Prob\{B\in\cup_{k\in\mathbb{N}}[2^l(2k-1),2^l(2k))\}.\label{fun:exp}
\end{align}
Using CDF of exponential distribution, we obtain
\begin{align}
\Prob\{2^l(2k-1)\leq B<2^l(2k)\}&=e^{-\lambda2^l(2k-1)}-e^{-\lambda2^l(2k)}\nonumber\\
&=e^{-\lambda2^l(2k)}\left(e^{\lambda 2^l}-1\right).\nonumber
\end{align}
Putting this back to (\ref{fun:exp}) we have
\begin{align}
\Prob\{B_l=1\}=\sum_{k=1}^{\infty}e^{-\lambda2^l(2k)}\left(e^{\lambda 2^l}-1\right)=\frac{1}{e^{\lambda 2^l}+1}. \nonumber
\end{align}

\section{Proof of Lemma~\ref{lem:exp_rate_distortion}}
\label{sec:AppB}

Note that maximum entropy theorem tells us
the distribution maximizing differential entropy over all probability densities $f$ on support set $\mathbb{R}^+$ satisfying
$$\int_{0}^{\infty}f(x)xdx=0,$$
$$\int_{0}^{\infty}f(x)xdx=1/\lambda,$$
is exponential distribution with parameter $\lambda$. Based on this result, by noting $\mathbb{E}[d(X^n,\hat{X}^n)]\leq D$ ie equivalent to say $X\geq \hat{X}$ and $\mathbb{E}[X-\hat{X}]\leq D$, we have
\begin{align}
I(X;\hat{X})&=h(X)-h(X|\hat{X})\nonumber\\
            &=\log(\frac{e}{\lambda})-h(X-\hat{X}|\hat{X})\nonumber\\
            &\geq \log(\frac{e}{\lambda})-h(X-\hat{X})\nonumber\\
            &\geq \log(\frac{e}{\lambda})-\log(e\mathbb{E}[X-\hat{X}])\nonumber\\
            &\geq \log(\frac{e}{\lambda})-\log(eD)\nonumber\\
            &=-\log (\lambda D).\nonumber
\end{align}
Obviously, we need $X-\hat{X}$ to be exponentially distributed and independent with $\hat{X}$ as well. More specifically, we can design a test channel from $\hat{X}$ to $X$ with additive noise $Z=X-\hat{X}$ distributed as exponential with parameter $1/D$, which gives (\ref{fun:exp_optimal_conditional}).

\section{Proof of Theorem~\ref{thm:exp_rate_1}}
\label{sec:AppC}

Due to decomposability of exponential distribution, the levels after expansion are independent, hence, the achievable rate in this theorem is straightforward to get. On the other hand, for the calculation of distortion, we have
\begin{align}
D_1 &=\mathbb{E}[\sum_{l=-\infty}^{\infty}2^lX_l-\sum_{l=-L_1}^{L_2}2^l\hat{X}_l]\nonumber\\
    &=\sum_{l=-L_1}^{L_2}2^ld_l +\sum_{l=L_2+1}^{\infty} 2^lp_l+\sum_{l=-\infty}^{-L_1-1}2^lp_l\nonumber\\
    &\leq \sum_{l=-L_1}^{L_2}2^ld_l +\sum_{l=L_2+1}^{\infty} 2^{-l}/\lambda +\sum_{l=-\infty}^{-L_1-1}2^l\nonumber\\
    &\leq \sum_{l=-L_1}^{L_2}2^ld_l + 2^{-L_2}/\lambda+2^{-L_1},\nonumber
\end{align}
which gives the result of the theorem.

\section{Proof of Theorem~\ref{thm:exp_rate_2}}
\label{sec:AppD}

By the design of coding scheme, if all higher levels are decoded as equivalence, then they must be encoded with one-sided distortion. Recall that for $Z$-channel, we have
$$\text{Pr}\{X_l\neq \hat{X_l}\}=\text{Pr}\{X_l=1,\hat{X}=0\}=d_l.$$
Hence, due to independence of expanded levels,
$$q_l=\prod_{k=l+1}^{L_2}(1-d_k).$$
Then, at each level, the achievable rate is $R_l$ with probability $q_l$ and is $\bar{R}_l$ otherwise. Thus, we have the expression of $R_2$ given by the theorem.

\section{Proof of Theorem~\ref{thm:exp_bound}}
\label{sec:AppE}

Without loss of generality, we assume $\lambda=1$ for simplicity in the proof. The proof of the theorem is based on an asymptotic result from \cite{Ozan:Expansion12}, which is restated without proof as follow.
\begin{align}
&0<H(p_l)<3\log e\cdot 2^l \quad\text{for }l\geq 0,\label{fun:exponential_lemma_1}\\
&1>H(p_l)>1-\log e\cdot 2^l \quad \text{for }l\leq 0.\label{fun:exponential_lemma_2}
\end{align}
By noting that $d_l$ is also the parameter of expanded exponential distribution at level $l$, but with a different mean, we have
\begin{equation}
d_l=\frac{1}{1+e^{2^l/D}}=\frac{1}{1+e^{2^{l+\gamma}}}=p_{l+\gamma},
\end{equation}
where $\gamma \triangleq -\log D$.
This result shows values of $\{d_l\}$ are right-shifted version of $\{p_l\}$ by $\gamma$ positions. Using this fact, together with equation (\ref{fun:exponential_lemma_1}) and (\ref{fun:exponential_lemma_2}), we have
\begin{align}
    \sum_{l=-L_1}^{L_2}\left[H(p_l)-H(d_l)\right]
= &\sum_{l=-L_1}^{L_2}H(p_l)-\sum_{l=-L_1+\gamma}^{L_2+\gamma}H(p_l)\nonumber\\
=&\sum_{l=-L_1}^{-L_1+\gamma-1}H(p_l)-\sum_{l=L_2+1}^{L_2+\gamma}H(p_l)\nonumber\\
 \leq &\gamma.\label{fun:proof_part1}
\end{align}
Moreover, note that
\begin{align}
&H(d_l)-(1-p_l+d_l)H\left(\frac{d_l}{1-p_l+d_l}\right)\nonumber\\
=&(1-p_l)\log(1-p_l)-(1-d_l)\log(1-d_l)\nonumber\\
&\quad-(1-p_l+d_l)\log(1-p_l+d_l).\label{proof:exp_first_result}
\end{align}
We want to bound this for two cases:
\begin{itemize}
\item[1)]
For $l\leq-\gamma$, by using the fact that function $g(x)=x\log x$ is convex and increasing on $(0.5,1)$, we have
\begin{align}
&(1-p_l)\log (1-p_l)-(1-d_l)\log (1-d_l)\nonumber\\
=&g(1-p_l)-g(1-d_l)\nonumber\\
\leq &-(p_l-d_l)g'(1-p_l)\nonumber\\
= &-(p_l-d_l)\left[\log e+\log(1-p_l)\right].\label{proof:exp_part1}
\end{align}
Then, by noting that $\log(1-x)\geq -2x\log e$ for any $x\in(0,0.5)$, we get
\begin{align}
&-(1-p_l+d_l)\log (1-p_l+d_l)\nonumber\\
\leq & 2(1-p_l+d_l)(p_l-d_l)\log e\nonumber\\
\leq & 2(p_l-d_l)\log e.\label{proof:exp_part2}
\end{align}
Putting equation (\ref{proof:exp_part1}) and (\ref{proof:exp_part2}) back to (\ref{proof:exp_first_result}), we have
\begin{align}
    &H(d_l)-(1-p_l+d_l)H\left(\frac{d_l}{1-p_l+d_l}\right)\nonumber\\
\leq &(p_l-d_l)[\log e-\log(1-p_l)]\nonumber\\
\leq & 2\log e(p_l-d_l).\nonumber
\end{align}
Further by noting that for $l\leq -\gamma$,
\begin{equation}
d_l=p_{l+\gamma}=\frac{1}{1+e^{2^{l+\gamma}}}\geq \frac12- 2^{l+\gamma-1},\nonumber
\end{equation}
and combining with the fact that $p_l<1/2$, we have
\begin{equation}
\hskip-1em H(d_l)-(1-p_l+d_l)H\left(\frac{d_l}{1-p_l+d_l}\right)\leq \log e\cdot2^{l+\gamma}.\label{fun:proof_part21}
\end{equation}
\item[2)] On the other hand, for $l>-\gamma$, similarly we have
\begin{align}
&(1-p_l)\log (1-p_l)-(1-p_l+d_l)\log (1-p_l+d_l)\nonumber\\
\leq &-d_l\left[\log e+\log(1-p_l)\right],\label{proof:exp_part3}
\end{align}
and
\begin{align}
-(1-d_l)\log (1-d_l)\leq  2\log e\cdot d_l.\label{proof:exp_part4}
\end{align}
Putting equation (\ref{proof:exp_part3}) and (\ref{proof:exp_part4}) back to (\ref{proof:exp_first_result}), we have
\begin{align}
    &H(d_l)-(1-p_l+d_l)H\left(\frac{d_l}{1-p_l+d_l}\right)\nonumber\\
\leq &d_l[\log e-\log(1-p_l)]\nonumber\\
\leq &2\log e\cdot d_l.\nonumber
\end{align}
Note that for $l>-\gamma$,
\begin{equation}
d_l=p_{l+\gamma}=\frac{1}{1+e^{2^{l+\gamma}}}\leq 2^{-l-\gamma},\nonumber
\end{equation}
then,
\begin{equation}
\hskip-2em H(d_l)-(1-p_l+d_l)H\left(\frac{d_l}{1-p_l+d_l}\right)\leq \log e \cdot 2^{-l-\gamma+1}.\label{fun:proof_part22}
\end{equation}
\end{itemize}
Collecting all the pieces together, we have
\begin{align}
R^{(1)} &=\sum_{l=-L_1}^{L_2} \left[ H(p_l)-(1-p_l+d_l)H\left(\frac{d_l}{1-p_l+d_l}\right)\right]\nonumber\\
    &\overset{\text{(a)}}{\leq} \gamma +\sum_{l=-L_1}^{L_2} \left[ H(d_l)-(1-p_l+d_l)H\left(\frac{d_l}{1-p_l+d_l}\right)\right]\nonumber\\
    &\overset{\text{(b)}}{\leq} R(D)+\sum_{l=-L_1}^{-\gamma} 2^{l+\gamma-1}+\sum_{l=-\gamma+1}^{L_2}2^{-l-\gamma}\nonumber\\
    &\leq R(D)+4\log e,\label{fun:proof_part3}
\end{align}
where inequality (a) comes from \eqref{fun:proof_part1}, and (b) comes from \eqref{fun:proof_part21} and \eqref{fun:proof_part22}.
Finally, by noting the fact from Theorem~\ref{thm:exp_rate_1} that
$$D^{(1)}\leq D+2^{-L_2}+2^{-L_1},$$
and that rate distortion function is convex and decreasing, we have
$$R(D)\leq R(D^{(1)})+(2^{-L_2}+2^{-L_1})\log e/D \leq R(D^{(1)})+2\log e.$$
Relating this to \eqref{fun:proof_part3}, we have
$$R^{(1)}\leq R(D^{(1)})+6\log e,$$
which completes the proof for $R^{(1)}$ and $D^{(1)}$ by choosing $c=6\log e$.

For the other part of the theorem, $R^{(2)}$ and $D^{(2)}$, observe that
\begin{align}
H\left( \frac{d_l}{1-2p_l+2d_l} \right)&\geq H\left( \frac{d_l}{1-p_l+d_l} \right)\nonumber\\
                    &\geq (1-p_l+d_l)H\left( \frac{d_l}{1-p_l+d_l} \right).\nonumber
\end{align}
Hence, for any $-L_1\leq l\leq L_1$, we have $\bar{R}_l\leq R_l.$
Thus, by noting $R^{(2)}$ is a convex combination of $\bar{R}_l$ and $R_l$ at each level, we have $R^{(2)}\leq R^{(1)}$. Combing with the observation that $D^{(1)}=D^{(2)}$, we have $R^{(2)}\leq R(D^{(2)})+c$.

\section{Proof of Lemma~\ref{lem:laplace_rate_distortion}}
\label{sec:AppF}

Maximum entropy theorem tells us Laplace distribution with parameter $\lambda$ has the
maximum differential entropy $h(f)$ over all probability densities $f$ on support set $\mathbb{R}$ satisfying
$$\int_{-\infty}^{\infty}f(x)xdx=0,$$
$$\int_{-\infty}^{\infty}f(x)|x|dx=1/\lambda.$$
Based on this result, it is evident to note that
\begin{align}
I(X;\hat{X})&=h(X)-h(X|\hat{X})\nonumber\\
            &\geq \log(\frac{2e}{\lambda})-h(X-\hat{X})\nonumber\\
            &\geq \log(\frac{2e}{\lambda})-\log(2eD)\nonumber\\
            &=-\log (\lambda D),\nonumber
\end{align}
where we have used the fact that $\mathbb{E}[|X-\hat{X}|]\leq D$. Obviously, we need $X-\hat{X}$ to be Laplace distributed and independent with $\hat{X}$ as well. More specifically, we can design a test channel from $\hat{X}$ to $X$ with additive noise $Z=X-\hat{X}$ distributed as Laplace with parameter $1/D$, as shown in (\ref{fun:lap_optimal_conditional}).


\section{Proof to Theorem~\ref{thm:laplace_bound}}
\label{sec:AppG}

We assume $\lambda=1$ without loss of generality. From the proof of Theorem~\ref{thm:exp_bound}, we have already seen:
$$\sum_{l=-L_1}^{L_2}[H(p_l)-H(d_l)]\leq \gamma,$$
where $\gamma\triangleq-\log D$. Moreover, note that
\begin{align}
D^{(1)} &=\mathbb{E}\left[\left|\sum_{l=-L_1}^{L_2}2^l (X_l-\hat{X}_l)\right|\right]\leq D.\nonumber
\end{align}
Combining the pieces together, it is evident to see
\begin{align}
R^{(1)}-R(D^{(1)})  &=1+\sum_{l=-L_1}^{L_2}[H(p_l)-H(d_l)]+\log D^{(1)}\leq 1.\nonumber
\end{align}
On the other hand, $(R^{(2)},D^{(2)})$ is obtained by convex combination of $(R^{(1)},D^{(1)})$ and $(0,1/\lambda)$, thus, we also have $R^{(2)}\leq R(D^{(2)})+1$.



\end{document}